\newcommand{\add}[1]{{\color{blue}#1}}
\theoremstyle{plain}
\newtheorem{theorem}{Theorem}
\newtheorem{corollary}{Corollary}
\newtheorem{lemma}{Lemma}
\theoremstyle{definition}
\newtheorem{definition}{Definition}
\newcommand{\ket}[1]{|#1\rangle}
\begin{document}
\title{Quantum Correlations Can Speed Up All Classical Computation}
\author{Carlos A Perez-Delgado}
\email{c.perez@kent.ac.uk}
\affiliation{School of Computing, University of Kent, Canterbury, Kent CT2 7NF United Kingdom.}
\author{Sai Vinjanampathy}
\email{sai@phy.iitb.ac.in}
\affiliation{Department of Physics, Indian Institute of Technology Bombay, Powai, Mumbai-400076, India.}
\affiliation{Centre for Quantum Technologies, National University of Singapore, 3 Science Drive 2, Singapore 117543.}

\begin{abstract}
Quantum algorithms that can speed up certain tasks, such as factorisation and unstructured search, have driven a decades-long development of quantum computers and quantum technologies. Yet, outside specialized applications, quantum computers are believed to offer no advantage over classical computers. Here, we present a method which exploits quantum effects to speed up all possible classical computations. This method---which we call Coherent Parallelization (CP)---exploits quantum correlations generated by higher-order Hamiltonians to speed up any possible classical computation by a factor that depends on the classical algorithm. This factor is quadratic in the size of the input for a large set of interesting problems, leading to a strong commercial application in the emergent area of quantum technologies. We present important theoretical consequences of CP for both quantum physics and the theory of algorithmic complexity and computability, and discuss how CP can be implemented using real-world systems with natural or engineered Hamiltonians.
\end{abstract}

\maketitle

%\pacs{03.67.Ac, 03.67.Dd}

%\input{intro}
The advent of quantum computation revealed an implicit assumption in all models of computation---namely, that the computer, its state and its evolution, were all classical. More importantly, it showed that moving beyond this assumption would allow for the efficient solution to problems previously thought intractable\cite{shor99}. The way quantum computers are generally shown to obtain an advantage over their classical counterparts is through the use of quantum algorithms. Quantum algorithms use quantum operators (gates) that are inaccessible to their classical counterparts and that can be composed together to create (quantum) computational circuits that are more efficient at solving particular problems.

Cost analysis, both for quantum and classical computation, is done entirely at the level of abstraction of circuits and gates. Circuits---both classical and quantum---are all composed of one- and two-(qu)bit gates, all of which are assumed to take an equal standard time to operate. The quantum advantage, in this case, rests solely on the fact that quantum circuits require fewer quantum gates to solve certain problems compared to their classical analogues. By going beyond the abstraction of quantum computation as circuits consisting of one- and two-body gates, a different quantum advantage is accessible. This involves going one level of abstraction \emph{below} that of  operators (gates) and studying the Hamiltonians by which these are implemented.

By exploiting many-body interactions in the Hamiltonian \cite{beltran,boixo,roy} and by the use of non-linear Hamiltonians, a quantum advantage has been shown in metrology  \cite{dorner2009optimal,giovannetti2006quantum,Kok2010,giovannetti2011advances,degen2017quantum}, imaging  \cite{perez2012,brida2010experimental} and energy storage  \cite{Binder2015,Campaioli2017,ferraro2018high,le2018spin}. In the case of metrology, this method of obtaining quantum advantage has now been experimentally demonstrated \cite{napolitano}. In this paper, we prove that a minimal model of non-linear Hamiltonians can be used to provide a quantum advantage in the implementation of Toffoli logic gates. Since Toffoli gates are universal gates for reversible classical computation, our method can be used to speed up all classical computation.

Our method does not change the \emph{total gate complexity} of a problem. Rather, it acts much in the same way as (classical) parallelism. Parallelization does not reduce the total gates needed to solve a problem, but it may reduce the total \emph{time} needed to do so by reducing the \emph{amortized} time needed to implement each gate. This speed-up can be drastic in some cases: the difference between gate complexity and \emph{depth} complexity can be as high as a single polynomial order. Our method acts similarly, but it exploits quantum coherences to speed up classical computation. Hence, we name our method \emph{Coherent Parallelization} (CP).  Our method can reduce the time required to solve a problem by up to two polynomial orders compared to a standard classical implementation (\emph{e.g.} a reduction from $O(n^3)$ to $O(n)$ time). From a theoretical standpoint this is the first time a quantum computer has been shown to have a computational time-cost advantage over classical computers \emph{for any possible computation}.

\section*{Results}

The main contribution of this paper is a method that exploits the quantum correlations introduced by higher-order Hamiltonians in order to speed-up classical computation. By extending a well-understood technique used in metrology \cite{beltran,boixo,roy,napolitano} and energy storage \cite{Binder2015,Campaioli2017}, we show that quantum correlations can impact all classical computation.

In any physical computing device, the state of the system is evolved from an input state to an output state under the influence of a \emph{computational Hamiltonian} $H$. Further, for this device to be physical, the semi-norm of the Hamiltonian $p(H)$ must be bounded by a constant, where we define

\begin{align}
p(H) = \vert\langle H - h_{\mbox{min}}I \rangle\vert = h_{\mbox{max}} - h_{\mbox{min}}.
\end{align}

Here $h_{\mbox{max}}$, $h_{\mbox{min}}$ are the maximum and minimum eigenvalues of $H$ respectively and $I$ is the identity operator. This semi-norm quantifies the well understood tradeoff between the energy-per-second cost of a computation and the time-cost of a computation. 

All quantum evolution is bounded by the \add{q}uantum \add{s}peed \add{l}imit (QSL) \cite{taddei2013quantum,deffner2017quantum,campaioli2018tightening}. This limit  states that the time $\tau$ to transform any quantum state $\rho_0$ to another state $\rho_f$ under a physical map is no smaller than $\tau_{QSL}$, given by
\begin{align}\label{eq:qsl}
\tau\geq \tau_{qsl}:=\mathcal{L}(\rho_0,\rho_f)\max\left(\frac{1}{E},\frac{1}{\Delta E}\right).
\end{align}

 Here $\mathcal{L}(\rho_0,\rho_f)$ is the Bures angle between the two states, $E=\tau^{-1}\int_{0}^{\tau}dt\langle H(t)-h_{min}\rangle$ is related to the average energy  \cite{margolus1998maximum} and $\Delta E=\tau^{-1}\int_{0}^{\tau}dt\Delta H(t)$ is related to the average standard deviation of energy  \cite{mandelstam1991uncertainty}. Here $h_{min}$ is the instantaneous ground state of the time-dependent Hamiltonian and $\Delta H(t)$ is the instantaneous standard deviation of the energy. This motivates our use of the semi-norm in our resource accounting. In short, the QSL gives a direct trade-off between the semi-norm $p(H)$ and the time it requires to perform a particular evolution using that Hamiltonian.

To exemplify this tradeoff explicitly, consider the logical NOT gate. This can be implemented quantum mechanically by the application of the unitary operator $U = \sigma_x$. The unitary operator in turn can be implemented by setting the Hamiltonian of the qubit system to $H = \sigma_x$ for a time $t = \pi/2$. If instead we use the Hamiltonian $H = 2 \sigma_x$, then we reduce the time needed to complete the evolution of a $NOT$ gate in half. In general the time $t$ scales in inverse proportion to the semi-norm $p(H)$.

In the context of computation we can see that by increasing the value of $p(H)$ where $H$ is the Hamiltonian driving the evolution of a quantum register during the implementation of a quantum gate, any computation can be sped up---almost \footnote{As Lloyd points out, this speed-up is not \emph{quite} arbitrary \cite{lloyd}. If one increases the instantaneous energy of the system too high, one runs the risk of creating a black hole where one's computer used to exist. } arbitrarily so---at the trade-off cost of increasing instantaneous energy of the system, or energy-per-second.

Hence, in order for time complexity of algorithms to remain meaningful within our model we must set a limit $p\left(H(t)\right) \leq \tau$, where $\tau(n)$ is a constant that can scale at most linearly in the size of the input $n$. Within this semi-norm constraint, we will discuss how we can make use of quantum correlations generated by non-linear Hamiltonians to speed up computation. We first discuss an example of CP applied to the NOT gate, followed by the formal theorem.

The traditional way of implementing two instances of the NOT gate in parallel is to apply the unitary operator $\sigma_x = e^{-i  \pi/2 \sigma_x }$ to each individual qubit. Collectively the system's Hamiltonian is set to $H_{\|} = \sigma_x \otimes I + I \otimes \sigma_x$, and evolved for time $t = \pi/2$.

\begin{figure}[t]
\begin{center}
\includegraphics[width=0.44\textwidth]{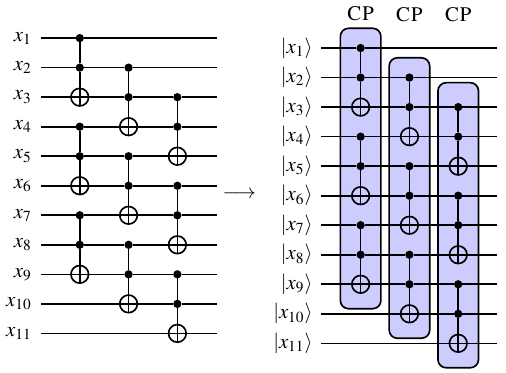}
\end{center}
\caption[Coherent Parallelization]{\emph{Coherent Parallelization:} 
The left hand side is a classical circuit consisting of nine Toffoli gates, arranged into three layers with three gates each. It requires nine time-steps to perform this circuit sequentially. A classical parallel implementation can perform the same circuit in three time steps---each gate having an amortized cost of $1/3$ time-steps. A coherent parallel implementation, right hand side, can further reduce the amortized cost of each gate down to $1/9$ time-steps, and hence perform the same circuit in a single time-step.
}
\label{fig-nots}
\end{figure}

Another way to implement this joint operation is in a coherent fashion. Instead of the Hamiltonian used above we use $H_{\#} = \sigma_x \otimes \sigma_x$. Note that $\sigma_x \otimes \sigma_x$ $ =$ $e^{-i  \pi/2 (\sigma_x \otimes I + I \otimes \sigma_x)}$ $=$ $e^{-i  \pi/2 (\sigma_x \otimes \sigma_x)}$. So, in both cases the system needs to be evolved under the appropriate Hamiltonian for time $t = \pi/2$. However $p(\cdot)$ is a resource for state transformations (and hence computation), since the quantum speed limit $t_{QSL}$ bound depends on this quantity. Hence to fairly compare the parallel and the coherent strategy it is necessary  to fix the resources, namely $p(H)$. Since $p(H_{\|}) = 2$ whereas $p(H_{\#}) = 1$, we can \emph{scale} $H'_{\#} = 2 H_{\#}$, and stay within the same norm limit $\tau$ as the parallel implementation $H_{\|}$. Therefore, $H'_{\#}$ implements both NOT gates in half the time that $H_{\|}$ requires. This argument is at the heart of the quantum advantage in energy storage  \cite{Binder2015,Campaioli2017}.

In the following two results  we generalize and extend this result to \emph{all} operators that are both unitary and Hermitian (see Supplementary Material for proofs).

\begin{lemma}
Let $H$ be a Hermitian, unitary operator. Then, for any positive integer $m$:
\begin{align}
	p\left( H_{\|}(m) \right) &= m p\left(H\right) \\
	p\left( H_{\#}(m) \right) &= p\left(H\right)
\end{align}
\end{lemma}

Here $H_{\|}(m)$ ($H_{\#}(m)$) refers to the coherent (standard) parallel implementation of $m$ copies of the unitary, Hermitian operator $H$. The following theorem follows directly from the previous lemma.

\begin{theorem}[Coherent parallelization]
	Let $H$ be any Hermitian unitary gate acting on a $d$-dimensional system or qudit. Implementing $m$ gates in parallel using a standard parallel computation implementation $H_{\|}$ is $m$ times slower than using a coherent parallelization approach $H_{\#}$.
\end{theorem}

The Toffoli gate is both a unitary and Hermitian operator. Moreover, it is universal for (reversible) classical computation. Hence, any classical reversible computation on $n$ bits can be implemented as a circuit consisting of $d$ layers, each consisting of $1 \leq m \leq n$ Toffoli gates. At each layer (time-step) one may choose to implement the Toffoli gates sequentially, in parallel, or coherently together using the method we described above. Using this latter method allows us to reduce the amortized time cost of implementing each individual Toffoli gate to  $1/m$ times the amortized (classical) parallel amortized cost of implementation, or to $1/m^2$ the standard sequential time cost of implementation.

This advantage is maximized in the case of highly parallelizable reversible circuits (where $m \approx n$ at every depth). In such cases the above method has the effect of reducing the time required to run the algorithm by two polynomial orders. For example, an $O(n^3)$  algorithm can be performed with $O(n)$ resources, an $O(n^2 \log n)$ algorithm with $O(\log n)$ resources \emph{etc.})
We arrive at the following result:

\begin{corollary}
	Let $\{C_n\}_n$ be  a uniform family of reversible circuits. The same  computation can be performed using coherent parallelization in time $T_{\#}(n) = O \left(\mathcal{C}(n) / \Delta^2 \right)$ where $\Delta  = \mathcal{C}/\mathcal{D}$.
\end{corollary}	

Here $\mathcal{S}$ and $\mathcal{D}$ refer to the circuit-cost complexity and circuit-depth complexity respectively of the algorithm in question.

Any \emph{irreversible} classical circuit over the gate set $\{$NAND$\}$ can be transformed into a reversible one over the gate set $\{$Toffoli$\}$ using one of many techniques  \cite{bennett1,bennett2,amy2017}. Hence, the above technique can be applied to \emph{any} classical algorithm. Bennett's method to convert irreversible computation to reversible computation, which consists of replacing all NAND gates with Toffoli ones (and introducing ancillary bits) neither increases the computational time complexity, nor does it change the computational depth complexity \footnote{While it does increase the space complexity, this is irrelevant to our analysis here.}. 
With this, we state our final theorem:

\begin{theorem}[Coherent parallelization of classical circuits]
	Let $\{C_n\}_n$ be  a uniform family of classical circuits over the universal gate set $\{$NAND$\}$. The same  computation can be performed using coherent parallelization in time $T_{\#}(n) = O \left(\mathcal{S}(n) / \Delta^2 \right)$ where $\Delta  = \mathcal{S}(n)/\mathcal{D}(n)$.
\end{theorem}	
Here $n$ is the size of the input, $\mathcal{S}(n)$ is the circuit gate cost (number of gates) of the classical circuit, and $\mathcal{D}(n)$ is the depth of the circuit. In short, coherent parallelization can be used to reduce the time cost of any classical algorithm to below its depth complexity.

\section*{Discussion}

Coherent parallelization (CP) exploits the same type of correlations in the Hamiltonian as do Heisenberg-limited metrology and quantum enhanced charging. As such, it is an intrinsically quantum effect with no classical analogue. 
CP is quite different from the advantage that well-known quantum algorithms have over classical. Algorithms like Shor's\cite{shor99} or Grover's\cite{grover96} display an advantage over classical counterparts when solving \emph{particular} problems. By comparison, CP can be used to accelerate \emph{any possible} classical computation. This speed-up is such that it can reduce the cost of solving a problem by up to two polynomial orders. This is clearly less than the advantage that Shor's algorithm has over classical factorization algorithms. However, coherent parallelization is a method that can be applied much more generally. It is much more general than even Grover's search algorithm\cite{grover96}.

Grover's algorithm can be used to solve any problem  within the class NP, however, it only gives an advantage for problems that do not (currently) have an algorithm that solves the problem more efficiently than brute-force search. Once a problem has an algorithm that solves it at least quadratically more efficiently than brute-force search, Grover's algorithm ceases to provide any advantage.
 On the other hand, CP is itself not an algorithm. Rather, it is a method that can accelerate any existing classical algorithm using quantum coherence. Hence, CP can provide a quantum advantage on \emph{any} possible computational problem, no matter how efficient current classical algorithms are.

A direct, important, consequence of our result then relates to provable quantum advantage. Previously, outside of oracle/black-box scenarios, the only provable computational advantage of quantum computation devices over classical was for a very narrow set of problems\cite{bravyi18}. Because CP can improve the run-time of \emph{any} classical algorithm, we now have provable quantum advantage for \emph{all} computational problems.

The proportion of this advantage grows the more parallelizable the problem is. The maximum advantage is achieved for problems that can be efficiently solved using parallel computation using low  (logarithmic) depth circuits---\emph{i.e.} problems in the class NC. Any such problem can be solved with coherent parallelization in logarithmic time. Hence  coherent parallelization  is particularly well-suited for speeding up ubiquitous mathematical tasks such as matrix multiplication and speed up physically important tasks such as \emph{Monte Carlo} simulations, genetic algorithms and many particle-physics simulations. Other computations that are particularly well-suited for coherent parallelization that are worth mentioning due to their real-world applications include machine-learning tasks like hyperparameter grid search and   cryptographic tasks such as   proof-of-work in crypto-currencies and blockchain technologies.

Let us consider sorting as a concrete example of the advantage obtainable using CP. Sorting $n$ integers can be done sequentially in $O(n \log n)$ time. This time is optimal for general integers. A fully classical parallel approach can do the same sorting in $O(\log n)$ time optimally. Using the same parallel algorithm, but using CP to speed up the implementation of gates allows us to sort these same $n$ integers in  $O\left(\frac{1}{n}\log n \right) \in O(1)$ time.

Another consideration is  error correction. A full systematic analysis of error correction is beyond the scope of this paper. However, it is worth noting that coherent parallelization, unlike (traditional) quantum computation,  implements solely classical gates, with classical inputs and classical outputs. In a traditional quantum computer, the quantum state of the computation, with all its coherences, must be maintained throughout the computation, from beginning to end. There is no such need in our scheme. Here, after each timestep, the state of the computer can be asserted to be entirely classical. Hence, only bit-flip errors need to be corrected.

Next, we comment on the accounting of resources associated with implementing CP. From a computational complexity perspective, our method acts similarly to classical parallelization. Both methods reduce the total time-cost of solving a problem, without reducing the computational complexity, by reducing the \emph{amortized} cost of implementing each individual gate. From a physical perspective, our use of higher-order Hamiltonians, and our resource-cost analysis is completely in-line with the use of higher-order Hamiltonians in other physical settings (besides computation) such as metrology and battery charging. In the case of metrology, super-Heisenberg metrology is a speed-up over other forms of quantum metrology achievable through the use of higher-order Hamiltonians.

Super-Heisenberg metrology has been theoretically shown to be possible only using higher-order Hamiltonians \cite{beltran,boixo,roy}.
There are various proposals for implementing these speedups using widely different experimental setups, such as scattering in Bose condensates \cite{boixo08}, Duffing nonlinearity in nano-mechanical resonators \cite{woolley}, two-pass effective non-linearity with an atomic ensemble \cite{chase}, Kerr-like nonlinearities \cite{rivas}, and nonlinear quantum atom-light interfaces \cite{napolitano10}. Finally, and most importantly, this speed-up has now been  experimentally demonstrated  \cite{napolitano}. In summary, it has been theoretically established and experimentally verified that it is indeed possible to speed-up a quantum process, \emph{without changing its circuit complexity,} by using higher-order Hamiltonians. CP extends this theoretically established and experimentally verified effect to speed up computation.

Coherent parallelization should be understood to be  be a result at the intersection of fundamental physics and theoretical computer science. It is a statement about the fundamental cost of performing a computation. 
Every quantum evolution including that of a computation is bounded by the quantum speed limit (see Eq.\ \ref{eq:qsl} and the discussion surrounding it). In short, the QSL tells us that (for purposes of quantum evolution) time \emph{is} energy. Reducing the semi-norm $p(H)$ of the Hamiltonian $H$ needed to implement a quantum computation is the same as reducing its time cost. And, this is precisely what CP allows us to do.

That said, the potential impact of this result goes well beyond theory. While an engineering analysis of CP is beyond the scope of this paper, it is likely that this result will lead to more efficient implementations of computation in practice. There are various potential ways to outright implement CP. Many systems with the desired collective quantum behavior have been studied beyond the ones already mentioned above---both natural  \cite{gross1982superradiance} and  engineered  \cite{Roy2017}. 
It has been shown that using many-body Hamiltonians that are natural to the system  being used to implement multi-qubit gates can yield advantages over implementations  that rely on naive gate decompositions  onto gates from a \emph{`standard'} universal gate-set\cite{Shende,zahedinejad2015high}.
Most importantly, recently Ferrero \emph{et.~al.} \cite{ferraro2018high} described a proposal for implementing what can be properly  seen as a CP implementation of a $NOT$ gate. A similar analysis, but for the Toffoli gate instead, would allow for a universal CP speed-up of all classical computation. This highly suggests that CP is not just of theoretical interest, but of great potential practical interest as well.

In closing, from a practical perspective CP would allow for properly designed quantum computers to speed up \emph{all possible} classical computation (rather than a small subset). This would drastically increase the interest in and impact of quantum computation. From a theoretical perspective, CP provides, for the first time, a clear tradeoff between the time required to perform a computation and quantum correlations. It opens a new approach of studying computation complexity that goes beyond circuit complexity to also consider quantum correlation complexity.

\section*{Acknowledgements}
The authors would like to thank Rosario Fazio, Felix Binder, Yingkai Ouyang for discussions and comments on early versions of this manuscript.

SV acknowledges support from an IITB-IRCC grant number 16IRCCSG019 and by the National Research Foundation, Prime Minister's Office, Singapore under its Competitive Research Programme (CRP Award No. NRF-CRP14-2014- 02).

%%%%%%

%%%%%%%% end biblio

\clearpage

\newpage

\section*{Supplementary Material}

\setcounter{theorem}{0}
\setcounter{lemma}{0}
\setcounter{corollary}{0}

We begin with a formal definition of the function $p(\cdot)$ that we have used in the main body of the paper.

\begin{definition}\label{def:p} 
Let $H$ be any Hermitian operator. Then
\begin{equation}
p(H) = |\langle H - h_{\mbox{min}}I \rangle| = h_{\mbox{max}} - h_{\mbox{min}},
\end{equation}
where $h_{\mbox{max}}$, $h_{\mbox{min}}$ are the maximum and minimum eigenvalues of $H$ respectively. 
\end{definition}
Next is a discussion of the model of computation we are presenting for the first time in this paper.

\emph{Model of Computation.---} 
We start this section with a formal definition of our computational model:

\begin{definition}[Computing Machine]\label{def:comp-model} 
	A Computing Machine (CM) consists of a closed physical system with three subsystems $B, C, S$: the \emph{battery, control,} and \emph{input/output} systems respectively.
\begin{description}
\item[Battery] consists of an unbounded countable number of two-dimensional subsystems each with Hamiltonian $H_B = \sigma_z$.
\item[Input/output] consists of a countably infinite dimensional system with Hamiltonian $H_0$ that can be arbitrarily chosen. All but a  finite subsystem $S$ of dimension $N = 2^n$ is set to the ground state of $H_0$ at the beginning of this computation. The state $\rho_0$ ($\rho_f$) of the subystem $S$ at the start (end) of the computation is called the \emph{input (output).}
\item[Control] exchanges energy with the battery subsystem to power the application of a Hamiltonian $H(t)$ for a time $T$ to the input/output system during computation. This can be done with standard energy conserving unitaries. This Hamiltonian is such that
\begin{equation}
	p\left(H(t)\right) \leq \tau(n), \quad \forall t,
\end{equation}
where $n$ is the size of the input and $\tau(n)$ is a constant that at most scales linearly in $n$.
\end{description}
\end{definition}

The purpose of our model of computation is to act as the most general abstraction of natural process that can perform computation, without ignoring any of the necessary physical properties of such a process.

The purpose of the battery subsystem is to account for the energy required to perform the computation. In order to be able to compare meaningfully different computations, a standard battery Hamiltonian is chosen for every possible computer and computation performed.

The input/output subsystem is how the computer communicates with the external world, and meaningfully performs computation. The subsystem is initialized to the input state before computation. At the end of the computation the subsystem should then hold the output state. An arbitrary Hamiltonian for this system is allowed in order to be able to model---and quantify the energetic resources in---computation on different information carriers. These information carriers can be anything from ions in a trap or potential well, to nuclei, to anyons, depending on the actual implementation of the quantum computer and they all different real-world Hamiltonians.

While we leave the possibility open in our model to any possible input/output system, we will be particularly interested in (and restrict further discussion to) systems with a homogenous repeating structure, \emph{e.g.} $n$ spin$-1/2$ particles each with Hamiltonian $\sigma_z$ and pairwise Ising interaction.

Finally, the sole purpose of the control subsystem is to provide a \emph{locus} for the computation itself. It mediates between the battery and the input/output system, and performs the computation itself by drawing power from the former, and applying an external Hamiltonian to the latter. This Hamiltonian $H(t)$ is time dependent, and arbitrarily chosen based on the computation to be performed. As mentioned in the main text, in order to maintain the meaningfulness of algorithmic time complexity within our model we must bound $p\left(H(t)\right)$ from above. Our bound $\tau(n)$ is dependent on $n$ to allow for parallel computation (performing multiple gates on different qubits at the same time). If we further limit $\tau$ to be a constant independent of $n$ we can define a \emph{sequential} computing machine. In this paper we focus on the more general (parallel) model.

There are many ways (computational models) to describe classical computations. Here we use the well understood standard circuit model. We understand a uniform family of reversible circuits $\{C_n\}_n$ to consist of  circuits $C_n$ consisting of only Toffoli gates, each acting on $n$ bits of input. Let $\mathcal{D}(C_n)$ be the circuit depth of $C_n$. For every $0 \leq i \leq N = 2^n$ let $C_n(i)$ be the result of running the circuit $C_n$ on the binary representation of $i$ as input. 

Also when discussing a particular algorithm described as a family of circuits, we will use $\mathcal{S}$, and $\mathcal{D}$ to refer to its circuit-cost complexity and circuit-depth complexity respectively.

\emph{Coherent Parallelization.---} We now focus on our method for increasing the efficiency/speed of arbitrary classical computations. 

Consider a quantum system with $m$ identical sub-systems (qubits, qudits or the tensor product thereof). 
Let $H[i]$, $1 \leq i \leq m$ for any Hermitian and/or unitary operator $H$ to mean $H$ applied to the $i'th$ subsystem. Formally:
\begin{equation}
H[i] = \left( \bigotimes_{i - 1} I \right) \otimes H \otimes \left( \bigotimes_{n - i - 1} I \right).
\end{equation}

For any  Hermitian, unitary operator $H$  we define
\begin{align}
H_{\|}(m) &= \sum_{i = 1}^m H[i]\\
H_{\#}(m) &= \bigotimes_{m} H
\end{align}

We then have the following result.

\begin{lemma}\label{lemma:coh-par}
Let $H$ be a Hermitian, unitary operator. Then, for any positive integer $m$:
\begin{align}
	p\left( H_{\|}(m) \right) &= m p\left(H\right) \\
	p\left( H_{\#}(m) \right) &= p\left(H\right)
\end{align}
\end{lemma}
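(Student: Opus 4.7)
The plan is to exploit the fact that $H$ is simultaneously Hermitian and unitary, so its spectrum is contained in $\{-1,+1\}$. This gives a very tight handle on the eigenvalues of both $H_\|(m)$ and $H_\#(m)$. I would first dispose of the degenerate case $H=\pm I$, for which $p(H)=0$ and both identities hold trivially (since then $H_\|(m)=\pm m I$ and $H_\#(m)=\pm I$). So assume henceforth that $H$ has both $+1$ and $-1$ in its spectrum, so that $h_{\max}=1$, $h_{\min}=-1$, and $p(H)=2$.

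For the first identity, I would observe that the summands $H[i]$ in $H_\|(m)=\sum_{i=1}^m H[i]$ pairwise commute, since each $H[i]$ acts as the identity on every factor except the $i$-th. Hence they admit a common eigenbasis built from product eigenvectors $\ket{\psi_1}\otimes\cdots\otimes\ket{\psi_m}$, where each $\ket{\psi_i}$ is an eigenvector of $H$ with eigenvalue $\lambda_i\in\{-1,+1\}$. The corresponding eigenvalue of $H_\|(m)$ is $\sum_{i=1}^m \lambda_i$, which ranges over all integers of the same parity as $m$ between $-m$ and $+m$. Choosing all $\lambda_i=+1$ (resp.\ all $-1$) gives the extreme eigenvalues, so
\begin{equation}
p(H_\|(m)) = m - (-m) = 2m = m\,p(H).
\end{equation}

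For the second identity, the same product-eigenbasis argument applies to $H_\#(m)=\bigotimes_{i=1}^m H$: its eigenvalues are the products $\prod_{i=1}^m \lambda_i\in\{-1,+1\}$. Choosing all $\lambda_i=+1$ yields eigenvalue $+1$, while choosing $\lambda_1=-1$ and $\lambda_i=+1$ for $i\geq 2$ yields eigenvalue $-1$; both are realized since by assumption $H$ has eigenvectors for both signs. Therefore the spectrum of $H_\#(m)$ is exactly $\{-1,+1\}$, giving
\begin{equation}
p(H_\#(m)) = 1 - (-1) = 2 = p(H).
\end{equation}

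The main conceptual point---and the only place where the hypothesis that $H$ is \emph{both} Hermitian and unitary is essential---is the second identity: without the spectral restriction $\mathrm{spec}(H)\subseteq\{-1,+1\}$, the products of eigenvalues in the tensor power would spread out and one would not in general have $p(H_\#(m))=p(H)$. So the only subtlety is the bookkeeping to verify that both $\pm 1$ actually appear in the spectrum of $H_\#(m)$, which reduces to the trivial observation above; there is no real obstacle.
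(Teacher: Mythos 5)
Your proof is correct and follows essentially the same route as the paper's: use the fact that a Hermitian unitary has spectrum in $\{-1,+1\}$, dispose of the $p(H)=0$ case, and identify the extreme eigenvalues of $H_{\|}(m)$ and $H_{\#}(m)$ via product eigenvectors. If anything, your verification that $\pm m$ and $\pm 1$ are indeed the extremal eigenvalues—by diagonalising in a complete common product eigenbasis—is cleaner than the paper's contradiction argument, which tacitly assumes every eigenket of $\bigotimes_m H$ factorises into eigenkets of $H$ (false in degenerate eigenspaces, though the conclusion stands).
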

\begin{proof}
Since $H$ is both Hermitian and unitary, its only possible eigenvalues are $\pm 1$. So either $p\left(H\right) = 2$, or $p\left(H\right) = 0 $. If $p\left(H\right) = 0$, then the lemma follows trivially. Therefore, lets assume $p\left(H\right) = 2$. Let $\ket{+}$ ($\ket{-}$) be $+1$ ($-1$) valued eigenket respectively of $H$. Then 
\begin{equation}
	\left(\sum_{i = 1}^m H[i] \right) \left( \bigotimes_m \ket{\pm} \right )= 
	 \pm m \left( \bigotimes_m \ket{\pm} \right )
\end{equation}
and
\begin{equation}
	\left(\bigotimes_{m} H \right)\left( \bigotimes_m \ket{\pm} \right )= 
	 \pm 1 \left( \bigotimes_m \ket{\pm} \right ).
\end{equation}
To complete the proof we must show that $\left( \bigotimes_n \ket{\pm} \right )$ are the maximum and minimum valued (respectively) eigenkets of both $\left(\bigotimes_{n} H \right)$ and $\left(\sum_{i = 1}^n H[i] \right)$. 

We show that the largest eigenvalue of $\bigotimes_{m} H$ is $1$. We proceed by contradiction. Assume there exists a vector $\ket{\omega}$ such that
\begin{equation}
	\left(\bigotimes_{n} H \right) \ket{\omega} = \omega \ket{\omega},
\end{equation}
where $\omega \in \mathbb{R}$ and $\omega > 1$, and that this is the largest valued eigenket of $\bigotimes_{m} H$.  Given that $\ket{\omega}$ is an eigenket of $\bigotimes_{n} H $ it must be that it may be written as 
\begin{equation}
	\ket{\omega} = \bigotimes_{i = 1}^n \ket{\omega_n},
\end{equation}
where each ket $\ket{\omega_n}$ is an eigenket of $H$. And further,
\begin{equation}
	\omega \ket{\omega} = \left(\bigotimes_{n} H \right) \ket{\omega}
	= \bigotimes_{i = 1}^n H \ket{\omega_n}
	= \bigotimes_{i = 1}^n 1 \ket{\omega_n},
\end{equation}
where in the last equality we used the facts that $\ket{\omega_n}$ is an eigenket of $H$, and that $H$ is both Hermitian and unitary. From here it follows that $1 < \omega = 1$, which is a contradiction. An identical argument can be used to show that $-1$ is the minimum eigenvalue of $H_{\#}$, and similar arguments can be used that the $\pm n$ are the maximum/minimum eigenvalues of $H_{\|}$.
\end{proof}

The following theorem follows directly from the previous lemma, and our computing machine definition.

\begin{theorem}[Coherent parallelization]\label{thm:coh-par}
	Let $H$ be any Hermitian unitary gate acting on a $d$-dimensional system or qudit. Implementing $m$ gates in parallel using a standard parallel computation implementation $H_{\|}$ is $m$ times slower than using a coherent parallelization approach $H_{\#}$.
\end{theorem}	
\begin{proof}
	Without loss of generality let the bound $\tau = 1$. Then, in order to use the standard parallelization method within the bound set, one must use a normalized version of the parallel Hamiltonian $H'_{\|}(m) = 1/m H_{\|}(m)$.
On the other hand to implement the $m$ gates using coherent parallelization one may use standard coherent parallelization Hamiltonian $H_{\#}(m)$ as defined above, since it is already normalized to $1$.
Then $p\left( H_{\|}(m)\right) = p\left( H_{\#}(m)\right) = 1$, as required. However,
\begin{equation}
	\bigotimes_m H = e^{ - i \pi/2 m H'_{\|}(m)} = e^{ - i \pi/2 H_{\#}(m)}.
\end{equation}
Which shows that using the Hamiltonian $H_{\#}(m)$ one can implement the desired gate $\bigotimes_m H$ a factor of $m$ times faster than using $H_{\|}(m)$.
\end{proof}

\begin{theorem}[Coherent parallelization of reversible circuits]
	Let $\{C_n\}_n$ be  a uniform family of reversible circuits, and let $A(n) = \{H(t), T\}$ be the implementation of said circuit as a computing machine algorithm and $T(n)$ is the time required to run $A$ on an input of size of $n$. The same  computation can be performed using coherent parallelization in time $ O \left( T(n)/ \Delta \right)$ where $\Delta  = \mathcal{S}/\mathcal{D}$.
\end{theorem}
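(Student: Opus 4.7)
The plan is to decompose the circuit $C_n$ into its $\mathcal{D}$ depth layers and apply Theorem~\ref{thm:coh-par} layer by layer, then collect the per-layer savings into a single asymptotic bound. By part~2(c) of Def.~\ref{def:classical_computation} the algorithm $A$ passes through $\mathcal{D}$ intermediate time points at which the input/output register sits in the computational basis state matching the truncated circuit output; between two consecutive such times the evolution generated by $H(t)$ implements exactly one layer of $C_n$. Let $m_d$ denote the number of Toffoli gates in layer $d$, so that $\sum_{d=1}^{\mathcal{D}} m_d = \mathcal{S}$ and the average layer width is $\Delta = \mathcal{S}/\mathcal{D}$.

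The key observation is that the Toffoli gate is a permutation matrix equal to its own inverse, hence simultaneously Hermitian and unitary. Treating the three-qubit support of each Toffoli as one subsystem in the sense of Lemma~\ref{lemma:coh-par}, the $m_d$ disjoint Toffolis of layer $d$ take the product form $\bigotimes_{m_d} T$ to which Theorem~\ref{thm:coh-par} applies. Replacing the standard parallel Hamiltonian $H_{\|}(m_d)$ by an appropriately scaled $H_{\#}(m_d)$ therefore implements the same layer unitary in time smaller by a factor $m_d$, while still obeying $p(\cdot) \leq \tau(n)$ so that the computing-machine energy bound is respected.

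Writing $\tau_0$ for the per-gate time of the original algorithm, the running time after this replacement becomes
\begin{equation}
T_{\mathrm{coh}} \;=\; \tau_0 \sum_{d=1}^{\mathcal{D}} \frac{1}{m_d}.
\end{equation}
Since part~2(b) of Def.~\ref{def:classical_computation} gives $T \sim \mathcal{D}\tau_0$ asymptotically, the target bound $T_{\mathrm{coh}} = O(T/\Delta)$ reduces to showing $\sum_d 1/m_d = O(\mathcal{D}/\Delta)$. When the schedule is balanced, meaning $m_d = \Theta(\Delta)$ uniformly in $d$, this is immediate; in the general case I would insert a scheduling step that redistributes independent Toffolis across neighbouring layers so that every resulting layer holds $\Theta(\Delta)$ gates, at a constant-factor cost in depth.

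I expect this balancing step to be the main obstacle. The AM--HM inequality gives $\sum_d 1/m_d \geq \mathcal{D}^2/\mathcal{S} = \mathcal{D}/\Delta$, so the stated bound is already the best one can hope for, and it is saturated only when the schedule is balanced; an adversarial schedule can drive $\sum_d 1/m_d$ as high as $\mathcal{D}$, which would erase the speedup entirely. The nontrivial content of the theorem is therefore not the coherent speedup itself, which is immediate from Theorem~\ref{thm:coh-par}, but the existence of a dependency-respecting rebalancing of $C_n$ that does not inflate the depth by more than a constant factor. For the families in $\mathrm{NC}$ that the main text emphasises, such a balanced form is standard and the bound is tight, so I would present the general case as a reduction to this balancing lemma rather than attempt to prove it ab initio.
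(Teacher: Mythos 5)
Your layer-by-layer strategy is the same one the paper uses, but your bookkeeping ties the speedup of layer $d$ to its own width $m_d$ (you scale $H_{\#}(m_d)$ so that the layer runs $m_d$ times faster), and this is what produces the harmonic sum $\sum_d 1/m_d$ and forces you to invoke a rebalancing lemma. That lemma is the genuine gap in your proposal: you give no proof that an arbitrary dependency-respecting reversible circuit can be rescheduled so that every layer carries $\Theta(\Delta)$ gates at only constant-factor depth overhead, and there is no reason to expect this for arbitrary dependency structures; you yourself only vouch for the balanced (NC-style) families. As written, your argument therefore establishes the stated bound only for balanced circuit families, not for every uniform family $\{C_n\}_n$.

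The obstacle is avoidable, because the model licenses a uniform rescaling rather than a layer-dependent one. By Lemma \ref{lemma:coh-par}, $p\left(H_{\#}(m_d)\right) = p(H)$ \emph{independently} of $m_d$, while Def.\ \ref{def:comp-model} only demands $p\left(H(t)\right) \leq \tau(n)$ with $\tau(n)$ allowed to grow linearly in $n$. Since each layer contains at most $n/3$ disjoint Toffolis, $\Delta = \mathcal{S}/\mathcal{D} \leq n/3$, and the baseline implementation $H_{\|}(m_d)$ already requires $p = 2m_d \leq 2n/3$; hence you may scale \emph{every} layer's coherent Hamiltonian by the same factor $\Delta$ without exceeding the budget the baseline uses. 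Every layer then takes time $t_0/\Delta$ regardless of its width, and summing over the $\mathcal{D}$ layers and using part 2(b) of Def.\ \ref{def:classical_computation} (i.e.\ $T_n \sim \mathcal{D}\, t_0$) gives $T_{\#} = O(T/\Delta)$ with no balance assumption at all. For what it is worth, your AM--HM critique does land squarely on the paper's own proof, which consists of the bare assertion that the layers are sped up ``on average'' by $\Delta$ and, read literally layer by layer, commits exactly the conflation you identify; the correct repair, however, is uniform rescaling within the $p$-budget, not circuit rebalancing.
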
	
\begin{proof}
First we note that the average number of gates in $\{C_n\}_n$ at each depth $d$ is given by $\Delta  = \mathcal{S}/\mathcal{D}$. Hence, by Thm.\ \ref{thm:coh-par} the implementation of the gates of $\{C_n\}_n$ at depth $d$ can be sped up \emph{on average} by a factor of $\Delta$ using coherent parallelization over a standard implementation. Taking the behavior at the asymptotic limit as $n \rightarrow \infty$ gives us the desired result.
\end{proof}

Note that in the previous theorem we're comparing a coherent parallelization implementation to a standard computing machine implementation of a classical reversible circuit. However, this latter implementation is already parallel (all gates at any depth $d$ are taken to be implemented at once). Obviously, a parallel implementation has a speed factor advantage of $\Delta  = \mathcal{C}/\mathcal{D}$ over a sequential implementation.
We've hence proven the following corollary.

\begin{corollary}\label{cor:rev-circuit}
	Let $\{C_n\}_n$ be  a uniform family of reversible circuits. The same  computation can be performed using coherent parallelization in time $T_{\#}(n) = O \left(\mathcal{C}(n) / \Delta^2 \right)$ where $\Delta  = \mathcal{C}/\mathcal{D}$.
\end{corollary}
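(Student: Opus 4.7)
The plan is to compose two speed-up factors that have already been established in the preceding text: the parallel-over-sequential speed-up that is intrinsic to the circuit family, and the coherent-parallel-over-parallel speed-up furnished by Theorem \ref{thm:coh-par} of the coherent-parallelisation reversible-circuit theorem. The baseline implicit in the corollary's stated complexity is the sequential circuit cost $\mathcal{C}(n)$, so the bound $O(\mathcal{C}(n)/\Delta^2)$ should arise simply by applying two factors of $\Delta$ in sequence against this baseline.

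First I would fix the sequential baseline. A sequential implementation of $\{C_n\}_n$ in the computing machine executes each of the $\mathcal{C}(n)$ Toffoli gates one at a time, each taking a time bounded by a constant (set by the constant ceiling $\tau$ on $p(H(t))$). Hence the total time of the sequential implementation is $\Theta(\mathcal{C}(n))$. Next, a standard parallel implementation executes every gate at each depth concurrently; the resulting total time is $\Theta(\mathcal{D}(n))$, and by the very definition $\Delta = \mathcal{C}/\mathcal{D}$ this equals $\Theta(\mathcal{C}(n)/\Delta)$. This is the first factor of $\Delta$.

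The second factor comes directly from the previous theorem. Taking $T(n) = \Theta(\mathcal{D}(n))$ as the time of the standard parallel implementation, the coherent-parallelisation theorem yields a further speed-up of $\Delta$, giving a total time of $O(T(n)/\Delta) = O(\mathcal{C}(n)/\Delta^2)$, which is precisely the claim of the corollary.

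I do not foresee a real obstacle: the only subtlety is that the number of gates at individual depths need not all equal $\Delta$, but Theorem \ref{thm:coh-par} has already absorbed this issue by using the \emph{average} number of gates per layer. The proof therefore reduces to making the two-step composition explicit and invoking the preceding theorem for the second step, with the asymptotic limit $n\to\infty$ smoothing out any discrepancy between per-layer and average speed-ups.
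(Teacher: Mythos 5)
Your proposal follows essentially the same route as the paper: the paper's justification for the corollary is exactly the composition of the parallel-over-sequential factor $\Delta = \mathcal{C}/\mathcal{D}$ (since the baseline implementation in the preceding theorem is already parallel, with time scaling as the depth) with the further factor $\Delta$ supplied by the coherent-parallelisation theorem, yielding $O(\mathcal{C}(n)/\Delta^2)$. Your version is, if anything, slightly more explicit than the paper's brief remark in fixing the sequential baseline $\Theta(\mathcal{C}(n))$ and in flagging that the per-layer gate count is handled on average, but the underlying argument is the same.
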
	

We conclude with the following result.

\begin{theorem}[Coherent parallelization of classical circuits]
	Let $\{C_n\}_n$ be  a uniform family of classical circuits over the universal gate set $\{$NAND$\}$. The same  computation can be performed using coherent parallelization in time $T_{\#}(n) = O \left(\mathcal{S}(n) / \Delta^2 \right)$ where $\Delta  = \mathcal{S}(n)/\mathcal{D}(n)$.
\end{theorem}	
\begin{proof}
For this proof we first convert  $\{C_n\}_n$ to a reversible family of circuits that has both the same depth- and circuit-complexity, and then simply apply Corollary\ \ref{cor:rev-circuit}.
For the first step we use a result by Bennet  \cite{bennett1,bennett2} that states that any irreversible circuit family with space complexity $\mathcal{S}$, circuit depth complexity $\mathcal{D}$ and circuit complexity $\mathcal{C}$ can be perfectly simulated using a reversible circuit with 
space complexity $\mathcal{S} + \mathcal{C} $, circuit depth complexity $\mathcal{D}$ and circuit complexity $\mathcal{C}$.
\end{proof}

It is worth noting that there are many methods to convert an irreversible circuit into a reversible circuit all of which have a space/depth complexity tradeoff. For our purposes, Bennet's method is optimal as it allows us to reach the the theoretical optimal time performance for coherent parallelization. For many real-world applications it may be beneficial to consider newer irreversible-to-reversible transformation methods  \cite{amy2017}.

\end{document}